\newtheorem{theorem}{Theorem}
\newtheorem{algorithm}[theorem]{Algorithm}
\newtheorem{definition}[theorem]{Definition}
\newtheorem{lemma}[theorem]{Lemma}
\newcommand{\flfrac}[2]{\left \lfloor \frac{#1}{#2} \right \rfloor}
\begin{document}

\title{Finding safe strategies\\
 for competitive diffusion on trees}
\author{Jeannette Janssen \and Celeste Vautour}

\begin{abstract}
We study the two-player safe game of Competitive Diffusion, a game-theoretic model for the diffusion of technologies or influence through a social network. In game theory, safe strategies are mixed strategies with a minimal expected gain against unknown strategies of the opponents. Safe strategies for competitive diffusion lead to maximum spread of influence in the presence of uncertainty about the other players. We study the safe game on two specific classes of trees, spiders and complete trees, and give tight bounds on the minimal expected gain. We then use these results to give an algorithm which suggests a safe strategy for a player on any tree. We test this algorithm on randomly generated trees, and show that it finds strategies that are close to optimal.
\end{abstract}

\maketitle

\section{Introduction}
Online social networks such as {\sl Facebook}, {\sl Twitter} and {\sl LinkedIn} have an increasingly important role in the spread of information through society. News about all kind of topics can spread quickly along the \lq\lq friend'' or ``follower" links in the network. Understanding and modelling this process, and determining best strategies for reaching a large number of users, is instrumental for commercial applications such as {\sl viral marketing}, but also for social activism and societal benefit, such as countering false rumours, spreading information about safe health practices, etc. 

{\sl Competitive Diffusion} is a game-theoretic model for the diffusion of information in a network which was introduced in \cite{Alon2010}. This game is built on the assumption that there are several players, who wish to spread competing information. One can think of companies wishing to encourage consumers to adopt their products, or political organizations wishing to spread a point of view about a contentious issue. The goal of each player is to reach the largest possible number of users. The messages spread by the players are assumed to be competitive, so any user who adopts the view of one of the players will not be susceptible to the messages sent by the other players. Moreover, users adopt the view of the player whose message is the first to reach them. If two competing messages reach the user at the same time, the user adopts a neutral position and effectively blocks the passage of information. 

Competitive diffusion lends itself to analysis via game theory. Because of the possibilities that users turn ``neutral", it is not a zero-sum game. In this paper, we consider the associated {\sl safe game}. This game focusses on one particular player, here referred to as Player 1, and the aim is to maximize the minimal gain of Player 1, regardless of the strategies of the other players. The safe game can be interpreted as the game where all other players have as their goal to minimize the gain of Player 1, rather than maximize their own gain. The reason to adopt the safe game scenario is because the traditional game assumes full information about the strategies of the other player. The safe game explores the scenario where the strategies of the other players are unknown, and thus the safest scenario for Player 1 is to assume that the other players are actively countering her strategy. This contrasts with the analysis of competitive diffusion in terms of pure Nash equilibria, where the assumption is that everyone is fully aware of the strategies of the other players, but the aim is for all players to maximize their own gain. 

Our results concern the safe game of competitive diffusion played on trees. We give an optimal safe strategy for full $q$-ary trees, and give asymptotically optimal safe strategies for spiders. For spiders consisting of a number of paths of equal length joined at a common vertex, we show that the safety value equals the gain of the disadvantaged player in a Nash equilibrium for competitive diffusion. In other words, we cannot improve  on the safe gain by assuming the fully open and self-interested game rather than the adversarial setting.  

Finally, we use results for special types of trees to develop a heuristic algorithm that can be applied to any tree. We show that the algorithm gives optimal results when applied to certain subclasses of trees. We also test the algorithm on randomly generated trees, and show that the safe strategies found by the algorithm have performance that is close to optimal. 

\subsection{Related Work}

The first studies on the spread of influence through social networks assumed a passive model. The goal was to predict how information diffuses through a network starting from a given set of vertices. If the information reaches a vertex, this vertex is said to be {\sl activated}. There are mainly two types of diffusion models, threshold models and cascade models. The difference in these model is in how vertices become activated.  

In threshold models, vertices become activated once a variable associated with the neighbourhood of a vertex  surpasses a certain threshold. The most commonly used is the {\sl Linear Threshold Model} (see \cite{Granovetter1978} and \cite{Kempe2003}). In this model, each vertex $v$ has a threshold $\theta_v$, and a vertex $v$ is influenced by each of its neighbours, $w$, by a weight $b_{v,w}$. A vertex becomes activated once the sum of the weights of its activated neighbours exceeds $\theta_v$.

In cascade models, as a vertex becomes activated, it activates each of its neighbours with a given probability. The most well-known is the Independent Cascade Model (see \cite{Goldenberg2001} and \cite{Kempe2003}). In this model, we also start with an initial set of activated vertices. Here, each edge $vw$ is assigned a probability $p_{v,w}$. If vertex $v$ becomes activated, its neighbour $w$ will become activated in the next round with probability $p_{v,w}$. The spread of influence in competitive diffusion can be seen as a cascade model where the activation probability equals 1.

The optimization problem studied in these diffusion models is how to choose the set of starting vertices so that the expected diffusion is maximized (see \cite{Kempe2003}, \cite{Borodin2010} and \cite{Chen2009}). In other words, the goal is to identify a set of initial influenced users which will bring a greater overall influence throughout the network.  A related approach is through {\sl Voronoi games} on graphs (see \cite{Demaine2011} and \cite{Sayan2013}).
Here the players choose a set of vertices, and all other vertices are assigned to the starting vertex which is closest to it. 

Competitive diffusion, as proposed in  \cite{Alon2010}, is the first game-theoretic model in which the players are considered to be outside the social networks. Players choose initial users to influence and their goal is to reach the most users.  In \cite{Alon2010} (see also erratum \cite{Takehara2012}), the authors discuss the relationship between the diameter of the graph and the existence of pure Nash equilibria. A pure Nash equilibrium is a strategy which corresponds to a set of initial vertices, whereas a mixed strategy represents a probabilistic approach where starting vertices are chosen with a certain probability. In \cite{Small20132}, the existence of a pure Nash equilibrium for competitive diffusion on trees is shown, while in \cite{Roshanbin2014}, results on pure Nash equilibria are given for several classes of graphs. Moreover, \cite{Small2013} considers the competitive diffusion on a recently proposed model for on-line social networks and discusses the existence of Nash equilibria. The safe game for competitive diffusion was introduced in \cite{Boudreau2013}, and some results for paths were given.

Generalizations of competitive diffusion were proposed in \cite{Goyal2012}  and \cite{ Goldberg2012}.  In \cite{Goyal2012}, the agents choose an allocation of budgeted seeds over the vertices and the diffusion process is stochastic. In \cite{ Goldberg2012}, the agents choose an initial set of vertices and the diffusion is a threshold model.

\section{Preliminaries}

\subsection{Competitive Diffusion Model}
Let us start by recalling the model Competitive Diffusion from \cite{Alon2010}.
Let $G$ be a graph with $n$ vertices and suppose there are $p$ players, $P_1,...,P_p$ each having a distinct assigned colour (not white or grey). The strategy of each player is to choose a vertex in $G$ as their starting vertex. The game begins by colouring each of the starting vertices of the players and then proceeds with the diffusion of colours through $G$ as follows: at each wave of diffusion, a vertex that has one or more neighbours with a certain colour inherits that colour while a vertex that has two neighbours with different colors turns grey.  
The diffusion finishes when all the vertices have either inherited a colour, have turned grey or are forced to stay uncoloured (white) being blocked off by grey vertices. In the end, the gain of each players is the number of vertices that has assumed his or her colour. The winner of the game is the player that has the greatest gain.
We note that if two or more players have the same starting vertex, then this vertex immediately turns grey.

While the game can be played with any finite number of players, this paper concentrates on the two-player version of the game. In the following, the two players will be called Player 1 (She) and Player 2 (He).

\subsection{Mixed strategies}

Consider competitive diffusion on an undirected graph $G$ with vertex set $V(G)=\{v_1,v_2,...,v_n\}$. We will denote the game matrix of Player 1 by $A_G$. Precisely, this is the matrix so that the entry $(A_G)_{ij}$ gives Player 1's gain if she chooses starting vertex $v_i$ and Player 2 chooses starting vertex $v_j$. A {\sl mixed strategy} for a player is a vector $(x_1,x_2,...,x_n)$ so that $\sum_{i=1}^n x_i=1$ and $x_i\geq 0$ for $i=1,\dots ,n$. It should be interpreted as a probabilistic strategy, where $x_i$ and $y_i$ is the probability the player  chooses vertex $v_i $ as starting vertex. 
Accordingly, the expected gain of Player 1 when she plays the mixed strategy $X=(x_1,x_2,...,x_n)$ and Player 2 plays the mixed strategy $Y=(y_1,y_2,...,y_n)$ is
\begin{equation}
\label{CompetitiveDiffusionExpectedGain}
Gain(G,X,Y)=XA_GY^T.
\end{equation}

Let $S_n=\{(z_1,z_2,...,z_n) \mid z_i\geq 0, 1 \leq i \leq n, \sum_{i=1}^nz_i=1\}$ be the strategy set of the players. We will use the special notation $Z(v_k)$ for a mixed strategy equivalent to a pure strategy, i.e.~when a player chooses vertex $v_k \in V(G)$ with probability 1 and the other vertices with probability 0. Precisely, $Z(v_k)=(z_1,z_2,...,z_n)$ with
\begin{equation}
z_i=
\begin{cases}
1 &  \text{if } i=k \\
0 &  \text{otherwise.}
\end{cases}
\end{equation}

\subsection{Safe Game and Notations}

The {\sl safe game} for competitive diffusion is the zero-sum game where the game matrix is $A_G$, which is the game matrix for Player 1 in competitive diffusion. The {\em safety value} for Player 1 is 
\[
value(A_G)=\min_{Y\in S_n}\max_{X\in S_n} XA_GY^T=\max_{X\in S_n} \min_{Y\in S_n}XA_GY^T.
\]
Moreover, if $Gain(G,X^*,Y^*)=value(A_G)$, then $X^*$ is called the {\em maxmin strategy} for Player 1, and $Y^*$ is called the {\em minmax strategy} for Player 2.

Any (mixed) strategy for Player 1 in the safe game will be referred to as a {\sl safe strategy}.  Correspondingly, any mixed strategy for Player 2 in the safe game will be called an {\sl opposing strategy}. In this paper, all strategies are assumed to refer to the safe game, unless stated otherwise.

The {\em guaranteed gain} of Player 1 with the safe strategy $X$, $GGain(G,X)$, is the minimal gain that Player 1 could receive with the strategy $X$, i.e.

\begin{equation}
\label{GuaranteedGain}
GGain(G,X)=\min_{Y \in S_n} XA_GY^T=\min_{y \in V(G)} Gain(G,X,Z(y)).
\end{equation}
The {\em maximal gain} of Player 1 against the opposing strategy $Y$ of Player 2 is the maximal gain that Player 1 could receive when Player 2 chooses the strategy $Y$, i.e.
\begin{equation}
MGain(G,Y)=\max_{X \in S_n} XA_GY^T=\max_{v_i\in V(G)} Gain(G,Z(v_i),Y).
\label{MaximalGain}
\end{equation}

Note that the guaranteed gain of Player 1 with any pure strategy $Z(v_k)$ equals zero, since Player 2 can counter by playing the same strategy $Z(v_k)$, reducing the gain of Player 1 to zero. Thus, any optimal safe strategy will be mixed. 

The guaranteed gain with any safe strategy $X$ for Player 1 is a lower bound on the safety value while the maximal gain of Player 1 against any opposing strategy $Y$ of Player 2 is an upper bound on the safety value. Mathematically, we have 
\begin{equation}
GGain(G,X) \leq value(A_G) \leq MGain(G,Y).
\end{equation}

Thus, any pair of strategies for Player 1 and Player 2 give a lower and upper bound on the safe gain $value(A_G)$. In the following sections, we find strategies so that these bounds are tight or asymptotically tight. 

\subsection{Trees: Weights and Centroid}
In this paper, we study the safe game for competitive diffusion on trees. Here we introduce some facts about trees which are relevant to our analysis.

There exists more than one notion of center in a graph. We use the``branch weight" notion of centroid from \cite{Mitchell1978}. 
A {\em branch} of a tree $T$ at a vertex $v$ is a maximal sub-tree of $T$ which has $v$ as a leaf.  Correspondingly, the {\em weight} of the vertex $v$, $w(v)$, is the maximum number of edges in any branch of $v$. We also use the notation $\overline{w}(v)=n-w(v)$, where $n$ is the size of $T$. The {\em centroid} of $T$, denoted $C(T)$, is the set of vertices which have the minimal weight in $T$ .

In a tree, it is known that the centroid is either a single vertex or two adjacent vertices \cite{Knuth}. Moreover, a tree which has only one vertex as centroid is called a {\em centroidal tree} and a tree which has two vertices as centroid is called a {\em bicentroidal tree}. We also have the following condition for a vertex to be in the centroid of a tree. 

\begin{theorem}[from \cite{Kang1975}]
\label{ThmCentroidVertex}
Let $T$ be a tree of size $n$ with $k$ branches having $n_1,n_2,...,n_k$ edges, respectively. Let $v$ be a vertex of $T$. Vertex $v$ is a centroid vertex of $T$ if and only if $n_i \leq \frac{n}{2}$ for $1 \leq i \leq k$. 
\end{theorem}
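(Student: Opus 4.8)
The plan is to prove both implications directly from the definitions, using the single structural fact that the branches at a vertex partition the edge set of the tree. Write $n$ for the number of vertices of $T$. Since the $k$ branches at $v$ meet only in $v$, their edge sets are disjoint and cover $E(T)$, so $\sum_{i=1}^k n_i = n-1$ and, by definition, $w(v)=\max_{1\le i\le k} n_i$. Recall that $v$ lies in the centroid exactly when $w(v)$ is minimal over $V(T)$, so in both directions the task reduces to comparing $w(v)$ with $w(u)$ for a suitable neighbour $u$. The computation I would isolate first is the weight of a vertex $u$ adjacent to $v$: if $u$ sits in branch $B_j$, then one branch of $u$ points ``back'' toward $v$ and consists of every edge of $T$ except those strictly on the far side of $u$, while the remaining branches of $u$ lie entirely inside $B_j$.

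For necessity I would argue by contraposition. Suppose some branch, say $B_1$, has $n_1 > n/2$, and let $u$ be the neighbour of $v$ inside $B_1$. The branch of $u$ pointing back toward $v$ then contains all $n-1$ edges except the $n_1-1$ edges of $B_1$ lying below $u$, so it has $(n-1)-(n_1-1)=n-n_1$ edges; since $n_1>n/2$ this is strictly less than $n_1$. Every other branch of $u$ is contained in $B_1$ minus the edge $uv$, hence has at most $n_1-1<n_1$ edges. Therefore
\[
w(u)\le \max\bigl(n-n_1,\; n_1-1\bigr)<n_1=w(v),
\]
so $v$ does not attain the minimal weight and is not a centroid vertex.

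For sufficiency I would assume every $n_i\le n/2$ and show $w(v)\le w(u)$ for an arbitrary $u\ne v$. The vertex $u$ lies in some branch $B_j$, and the edges strictly on the far side of $u$ from $v$ all lie in $B_j$, which also contains the edge of $B_j$ incident to $v$; hence there are at most $n_j-1\le n/2-1$ such far-side edges. Consequently the branch of $u$ pointing toward $v$ has at least $(n-1)-(n/2-1)=n/2$ edges, giving
\[
w(u)\ \ge\ \frac{n}{2}\ \ge\ \max_{1\le i\le k} n_i\ =\ w(v).
\]
Since $u$ was arbitrary, $w(v)$ is minimal and $v$ is a centroid vertex, which completes the equivalence.

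The edge counts themselves are routine; the part that needs care, and which I expect to be the main obstacle, is making the identification of ``the branch of $u$ toward $v$'' precise and verifying the bounds on the edges lying strictly beyond $u$ (exactly $n_1-1$ in the necessity step, where $u$ is adjacent to $v$, and at most $n_j-1$ in the sufficiency step for a general $u$). Getting the strict versus non-strict inequalities right also matters: necessity must yield a \emph{strict} decrease $w(u)<w(v)$, which is what rules out an over-heavy vertex from even tying for the minimum, whereas sufficiency needs only the non-strict bound $w(v)\le w(u)$ and so correctly permits the bicentroidal case, in which two adjacent vertices both satisfy the condition with equality.
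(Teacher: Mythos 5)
Your proof is correct, but there is no paper-internal proof to compare it against: Theorem \ref{ThmCentroidVertex} is imported from \cite{Kang1975} and used as a black box, so your argument supplies what the paper delegates to the literature. What you give is the standard neighbour-comparison proof, and it checks out. The bookkeeping is right: the branches at $v$ partition $E(T)$, so $\sum_i n_i = n-1$ and $w(v)=\max_i n_i$; when $u$ is the neighbour of $v$ inside an over-heavy branch $B_1$, its branch toward $v$ has $(n-1)-(n_1-1)=n-n_1<n_1$ edges while its remaining branches sit inside the far side of $B_1$, giving the strict drop $w(u)<n_1\le w(v)$ that necessity requires; and for sufficiency, the bound of $n_j-1$ on the edges strictly beyond a general $u$ yields $w(u)\ge (n-1)-(n/2-1)=n/2\ge w(v)$, correctly non-strict so that the bicentroidal case (two adjacent vertices tying at weight $n/2$) survives. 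Two small remarks. First, in the necessity step you write $n_1=w(v)$ without comment; this is true, since $n_1>n/2$ while the other branches share the remaining $n-1-n_1<n/2$ edges, so $n_1$ is the unique maximum --- but the weaker $n_1\le w(v)$, which is immediate, already gives $w(u)<w(v)$. Second, your reading of ``size $n$'' as the number of vertices, so that the branch sizes sum to $n-1$, is the interpretation the paper itself relies on when it applies the theorem in Lemma \ref{Lemmanminusweight} to pass from $w(v)>\frac{n}{2}$ to $w(v)>n-w(v)=\overline{w}(v)$; making that convention explicit, as you do, is worthwhile because the theorem as printed leaves it ambiguous. Incidentally, your central computation --- that for a vertex $u$ away from a light vertex $v$ the branch of $u$ pointing back toward $v$ dominates --- is the same mechanism behind the paper's Lemma \ref{TreeCentroidinBranch}, so your self-contained argument meshes cleanly with the auxiliary facts the paper does prove.
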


The following lemmas on the weights of vertices will be helpful in establishing the main results.
\begin{lemma}
\label{TreeCentroidinBranch}
For any tree $T$ of size $n$, if $v$ is a vertex of $T$ not part of the centroid $C(T)$, then its weight $w(v)$ is the number of edges in the branch at $v$ in which  $C(T)$ is located.
\end{lemma}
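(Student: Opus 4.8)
The plan is to show that among all branches at $v$, the unique heaviest one is precisely the branch containing the centroid; since $w(v)$ is by definition the edge-count of the heaviest branch, this yields the claim. First I would record two preliminary observations. The branches at $v$ partition the edges of $T$, so their edge-counts sum to $n$; consequently at most one branch at $v$ can have more than $n/2$ edges, and such a branch, when it exists, is the strict maximum and attains $w(v)$. Second, since $C(T)$ is either a single vertex or two adjacent vertices and $v\notin C(T)$, removing $v$ leaves all of $C(T)$ inside a single connected component of $T-v$; hence there is a well-defined branch $B_c$ at $v$ that contains $C(T)$, so the phrase ``the branch in which $C(T)$ is located'' is unambiguous.

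Next I would apply Theorem~\ref{ThmCentroidVertex} to $v$ itself: because $v$ is not a centroid vertex, some branch at $v$ must have more than $n/2$ edges, and by the counting remark this heavy branch is unique and equals $w(v)$. It therefore suffices to prove that the heavy branch is $B_c$, i.e.\ that every branch at $v$ \emph{not} containing $C(T)$ has at most $n/2$ edges.

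To do this I would fix a centroid vertex $c\in C(T)$ (the one nearest $v$ if $C(T)$ has two vertices) and compare branches at $v$ with branches at $c$. Let $\beta$ denote the branch at $c$ pointing back toward $v$, namely the branch built from the component of $T-c$ containing $v$. Since $c$ is a centroid vertex, Theorem~\ref{ThmCentroidVertex} gives that $\beta$ has at most $n/2$ edges. Now take any branch $B_j\neq B_c$ at $v$: its neighbour direction is not the one lying on the path from $v$ to $c$, so each vertex of $B_j$ reaches $c$ only through $v$, and hence all vertices and all edges of $B_j$ lie inside $\beta$. Therefore the edge-count of $B_j$ is at most that of $\beta$, which is at most $n/2$. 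Thus every branch at $v$ other than $B_c$ is bounded by $n/2$, while the heavy branch exceeds $n/2$; the heavy branch must be $B_c$, and $w(v)$ equals its edge-count, as required.

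The main obstacle is the containment step in the last paragraph: verifying that each non-centroidal branch at $v$ sits entirely inside the $v$-ward branch $\beta$ at $c$, so that the inclusion translates into the edge-count comparison. The bicentroidal case needs only the small check that both centroid vertices lie in the same branch $B_c$, which follows from their adjacency together with $v\notin C(T)$. Once the strict bound ($>n/2$ for the heavy branch) is paired with the weak bound ($\le n/2$ for all the rest), the identification is immediate and no delicate estimate is needed.
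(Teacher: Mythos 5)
Your proof is correct, but it takes a genuinely different route from the paper's. The paper argues by contradiction using only the minimality of the centroid's weight: assuming the branch $B$ at $v$ containing the centroid is not the heaviest, it bounds $w(v)\le n-|B|$ (any other branch lies in the complement of $B$) and $w(c)\ge n-|B|$ (the branch at the centroid vertex $c$ pointing toward $v$ contains everything outside $B$), so minimality of $w(c)$ forces $w(c)=w(v)$, contradicting $v\notin C(T)$; Theorem~\ref{ThmCentroidVertex} is never invoked there. You instead argue directly, applying Theorem~\ref{ThmCentroidVertex} twice: at $v$, to produce a branch with more than $n/2$ edges (necessarily the unique maximum, hence attaining $w(v)$), and at $c$, to cap the $v$-ward branch $\beta$ at $n/2$ edges, which then caps every non-centroidal branch $B_j$ at $v$ via the containment $B_j\subseteq\beta$. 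Both arguments hinge on essentially the same containment observation---your $B_j\subseteq\beta$ is the paper's bound $w(c)\ge n-|B|$ viewed from the other end---but yours buys slightly more: it shows the centroid branch is the strict unique maximum with $w(v)>n/2$, so the content of Lemma~\ref{Lemmanminusweight} drops out as a byproduct, whereas the paper's contradiction argument is shorter and needs only the definition of the centroid as the set of minimum-weight vertices. One harmless slip: the branches at $v$ partition the edge set of $T$, so their edge-counts sum to $n-1$ (the paper's $n$ counts vertices), not $n$; since $n-1<n$, your conclusion that at most one branch can exceed $n/2$ edges still stands.
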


\begin{proof}
By way of contradiction, suppose that $B$, the branch at $v$ in which the centroid is located, is not the branch with the maximum number of edges. Let $c$ be a vertex in the centroid. Since the weight of a vertex is the maximum number of edges in one of its branches, we have $w(c) \geq n- |B|$.
On the other hand, since $B$ is not the branch at $v$ with the maximum number of edges, we have $w(v) \leq n - |B|$.
Thus, $n-|B| \leq w(c) \leq w(v) \leq n-|B|$ since the centroid is the vertex with the minimal weight in $T$. Hence, $w(c)=w(v)$ which is a contradiction since $v$ is not a vertex in the centroid of $T$.
\end{proof}

\begin{lemma}
\label{Lemmanminusweight}
For any tree $T$ of size $n$, if $v$ is a vertex of $T$ not part of the centroid $C(T)$, then
\begin{align}
w(v)>\overline{w}(v).
\end{align}

\end{lemma}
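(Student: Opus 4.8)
The plan is to reduce the claimed inequality to the single statement $w(v) > n/2$ and then obtain the latter directly from the centroid characterization in Theorem~\ref{ThmCentroidVertex}. Since $\overline{w}(v) = n - w(v)$ by definition, the inequality $w(v) > \overline{w}(v)$ is equivalent to $w(v) > n - w(v)$, i.e.\ to $2w(v) > n$, or $w(v) > n/2$. So it suffices to show that the maximum branch weight at $v$ exceeds $n/2$.

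To produce a branch with more than $n/2$ edges, I would invoke the contrapositive of Theorem~\ref{ThmCentroidVertex}. That theorem states that $v$ is a centroid vertex precisely when every branch, with edge counts $n_1,\dots,n_k$, satisfies $n_i \le n/2$. By hypothesis $v \notin C(T)$, so $v$ is not a centroid vertex, and hence the condition \lq\lq every branch has at most $n/2$ edges\rq\rq\ must fail. Therefore at least one branch at $v$ has strictly more than $n/2$ edges.

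Finally, since $w(v)$ is by definition the maximum number of edges over all branches at $v$, the existence of a branch with more than $n/2$ edges forces $w(v) > n/2$. Combining this with the reduction in the first step gives $w(v) > n/2 > n - w(v) = \overline{w}(v)$, as required. If one prefers, Lemma~\ref{TreeCentroidinBranch} identifies this heavy branch explicitly as the branch containing $C(T)$, but that identification is not actually needed for the inequality.

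I do not anticipate a genuine obstacle here: the result is essentially immediate from Theorem~\ref{ThmCentroidVertex} once one rewrites $\overline{w}(v)$. The only point requiring a little care is the strictness of the inequality, where one must use that failing the centroid condition yields a branch with strictly more than $n/2$ edges rather than merely at least $n/2$; since this is exactly the negation of the \lq\lq$\le n/2$\rq\rq\ condition, the strictness comes for free.
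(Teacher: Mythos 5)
Your proof is correct and follows essentially the same route as the paper: both arguments apply the contrapositive of Theorem~\ref{ThmCentroidVertex} to a non-centroid vertex $v$ to obtain a branch with strictly more than $n/2$ edges, conclude $w(v) > n/2$, and then rewrite this as $w(v) > n - w(v) = \overline{w}(v)$. Your added remark that Lemma~\ref{TreeCentroidinBranch} is not needed here is also consistent with the paper, whose proof does not invoke it.
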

\begin{proof}
If $v$ is not a centroid vertex, at least one of its branches must have more than $\frac{n}{2}$ vertices by Theorem \ref{ThmCentroidVertex}. Since $w(v)$ is the number of edges in the largest branch at $v$, we must have $w(v)>\frac{n}{2}$. Thus $w(v)>n-w(v)$.
\end{proof}

\section{Spiders}

We start the study of the two-player safe game of competitive diffusion by giving tight bounds on the safety value for the game on special cases of trees, spiders and complete trees. The corresponding safe strategies will give insight to suggest a good safe strategy for a player on any tree.

A {\em spider}  is a tree with one and only one vertex of degree exceeding 2. The vertex with degree exceeding 2 is called the {\em body} of the spider. Moreover, any branch at the body of the spider is none other than a non-trivial path and is called a {\em leg} of the spider. (See \cite{Lin2011}).

Let us denote the $m$ legs of a spider $S$ by $\{ s_1,s_2,...,s_m \}$ and their lengths respectively by $ \{ l(s_1),l(s_2),...,l(s_m) \}$. We will label a vertex $v_i$ in $S$ by an ordered pair $(d,s)$ where $d$ is the number of edges from the vertex $v_i$ to the body of the spider and where $s \in \{ 1,2,...,m \}$ is the index of the leg the vertex belongs to. By convention, the body of the spider will be identified by the ordered pair $(0,0)$. We suggest the following safe strategy for Player 1 on a spider with legs of equal lengths. The strategy has positive probabilities of choosing the body of the spider and the first $k$ vertices of the legs.
\begin{definition}
\label{DefStrategyC_S(k)}
Given a spider, $S$, with $m$ legs each having $\ell$ vertices. Let the vertices of $S$ be labelled $v_0,v_1,\dots , v_{m\ell }$, where  $v_0$ is the body of $S$, and for $d \in \{1,2,...,\ell \}$ and $s \in \{1,2,...,m\}$, the vertex $v_i$ where $i=d+(s-1)\ell$  is the vertex labelled by the ordered pair $(d,s)$. For any $k \in \{0,1,...,\ell\}$, define the strategy $C_S(k)$ to be the strategy $(z_0,z_1,z_2,...,z_{m\ell})$ as follows. Consider vertex $v_i$ with label $(d,s)$. Then its probability 
\begin{equation}
z_{i}=
\begin{cases}
0 & \text{ if } k < d \leq \ell, \\
\frac{1}{mk+1} & \text{ if } 0 \leq d \leq k .
\end{cases}
\end{equation}
\end{definition}
Considering the strategy $C_S(k)$ as a safe strategy for Player 1 and as an opposing strategy for Player 2 leads to the following bounds on the safety value.

\begin{theorem}
\label{SpiderS1SafetyValueBounds}
In the two-player Competitive Diffusion on $S$ with $m$ legs each having $\ell$ vertices, the safety value of Player 1 is between $\ell-\frac{\sqrt{\ell }}{\sqrt{m}}+\mathcal{O}(1)$ and $\ell$ (asymptotics as $\ell$ goes to infinity).
\end{theorem}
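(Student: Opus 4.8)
The plan is to establish the two bounds separately, each by exhibiting a single explicit strategy and invoking the sandwiching inequality $GGain(S,X)\le value(A_S)\le MGain(S,Y)$ from the preliminaries.

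For the upper bound I would let Player 2 play the pure strategy $Z(v_0)$ (the body) and show $MGain(S,Z(v_0))=\ell$. The key computation is the gain of a pure Player 1 response against the body. If Player 1 starts at the vertex $(d,s)$, then on leg $s$ the vertex at distance $j$ from the body is reached by Player 1 at time $d-j$ and by Player 2 at time $j$, so Player 1 wins exactly the vertices with $j>d/2$, while Player 2 captures the body and sweeps every other leg unopposed. Counting gives $Gain(S,Z((d,s)),Z(v_0))=\ell-\lfloor d/2\rfloor$, whereas a collision at the body gives $0$. Hence the maximum over all pure responses is $\ell$, attained at $d=1$, and therefore $value(A_S)\le\ell$.

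For the lower bound I would take $X=C_S(k)$ and estimate $GGain(S,C_S(k))=\min_{y}Gain(S,C_S(k),Z(y))$, optimizing over $k$ at the end. First I would argue that Player 2's minimizing response is the body, reducing the whole computation to one sum. The intuition is that from the body Player 2 simultaneously contests the inner half of every leg, whereas any response at distance $t\ge 1$ on a single leg $s_0$ frees Player 1's most valuable option: starting at the body now yields gain $(m-1)\ell+\lceil t/2\rceil$ instead of $0$, a jump of order $\ell$. I would show that for responses with $t>k$ every supported Player 1 vertex reaches the body strictly before Player 2 and hence captures almost the whole tree, and that for $1\le t\le k$ any loss Player 1 suffers on leg $s_0$ (including the worst case, a collision at $(t,s_0)$ that zeroes out one term) carries weight only $1/(mk+1)$ and is dominated by the gain of order $(m-1)\ell/(mk+1)$ from the now-unblocked body option, using $m\ge 2$. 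On every other leg Player 2 is merely delayed in reaching the body, which can only increase Player 1's haul relative to the body response.

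Granting that the body is the minimizer, the evaluation is routine. Summing $\ell-\lfloor d/2\rfloor$ over the $mk$ supported leg vertices together with the zero from the body collision, and using $\sum_{d=1}^{k}\lfloor d/2\rfloor=\lfloor k^2/4\rfloor$, gives
\[
Gain(S,C_S(k),Z(v_0))=\frac{m\bigl(k\ell-\lfloor k^2/4\rfloor\bigr)}{mk+1}=\ell-\frac{\ell}{mk+1}-\frac{k}{4}+\mathcal{O}(1).
\]
Minimizing the loss $\frac{\ell}{mk}+\frac{k}{4}$ over $k$ gives $k^{\ast}\approx 2\sqrt{\ell/m}$, an admissible integer once $\ell$ is large, at which the loss equals $\sqrt{\ell/m}+\mathcal{O}(1)$; since the objective is flat to second order there, rounding $k^{\ast}$ to an integer costs only $o(1)$. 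This yields $GGain(S,C_S(k^{\ast}))\ge \ell-\sqrt{\ell/m}+\mathcal{O}(1)$ and hence the stated lower bound. The main obstacle is the reduction used above: rigorously proving that the body minimizes Player 2's response against $C_S(k)$, which demands a complete case analysis of the diffusion races across the three regimes (Player 1 and Player 2 on the same leg, on different legs, and the body option), with careful bookkeeping of ties and of the $1/(mk+1)$ weights. Everything after that reduction is a single one-variable optimization.
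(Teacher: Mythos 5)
Your proposal follows essentially the same route as the paper: the upper bound via Player 2 parked on the body (the paper phrases this as the opposing strategy $C_S(0)=Z(v_0)$), the lower bound via the safe strategy $C_S(k)$ together with the reduction that the body is Player 2's minimizing pure response, and the closing one-variable optimization $k^*\approx 2\sqrt{\ell/m}$ giving guaranteed gain $\ell-\sqrt{\ell}/\sqrt{m}+\mathcal{O}(1)$. The paper establishes the reduction exactly where you locate the main obstacle, by writing out the expected gain in the three regimes ($d=0$, $0<d\leq k$, $d>k$) and comparing them, so in outline your argument and the paper's coincide.

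However, one quantitative claim inside your deferred case analysis would fail as stated. For a response at $(t,s_0)$ with $1\leq t\leq k$, the losses on leg $s_0$ are not confined to the zeroed collision term: every supported vertex $(\delta,s_0)$ with $\delta>t$ drops from $\ell-\lfloor\delta/2\rfloor$ to $\ell-\lfloor(\delta+t)/2\rfloor$, a loss of roughly $t/2$ each, so the total same-leg loss is about $\bigl(\ell-\lfloor t/2\rfloor\bigr)+(k-t)t/2$. For $m=2$ the body-option jump is only $(m-1)\ell+\lceil t/2\rceil=\ell+\lceil t/2\rceil$, and with, say, $t\approx k/2$ and $k\approx 2\sqrt{\ell/m}$ the extra term $(k-t)t/2\approx\ell/4$ makes your claimed domination ``using $m\geq 2$'' false. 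The argument is rescued precisely by the other-leg terms you treat as a qualitative bonus: each supported vertex $(\delta,s')$ with $s'\neq s_0$ and $\delta>t$ gains about $\lfloor\delta/2\rfloor-\lfloor(\delta-t)/2\rfloor\approx t/2$ because Player 2 now needs $t$ extra steps to reach the body, and summed over the $m-1$ other legs these gains offset the $(k-t)t/2$ loss; only then does the body-option jump beat the collision loss, with margin of order $t$, for every $m\geq 2$. So the ledger must include the other-leg gains quantitatively rather than as a one-sided remark --- this is exactly what the paper's explicit two-line formula for the $0<d\leq k$ case, with its $(m-1)$-weighted second summation, accomplishes. With that bookkeeping repaired, the rest of your proposal (the $t>k$ case, the closed form $\frac{m}{mk+1}\bigl(k\ell-\lfloor k^2/4\rfloor\bigr)$, and the rounding of $k^*$) is sound and matches the paper.
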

\begin{proof}
Assume the vertices of the spider are labelled as in the statement of Definition \ref{DefStrategyC_S(k)}, that is, if vertex $v_j$ has label $(d,s)$, then $j=d+(s-1)\ell$.
As a lower bound, we have the guaranteed gain of Player 1 with the strategy $C_S(k)$. 
As stated in (\ref{GuaranteedGain}), the guaranteed gain is the minimal gain of Player 1 over all the possible starting vertices for Player 2. Due to symmetry, we only need to consider the body of the spider and the vertices on one of the legs of $S$.  The expected gain of Player 1 when Player 2 chooses the body is

\begin{eqnarray*}
Gain(S,C_S(k),Z(v_0))=\frac{1}{mk+1}  \left(0+m \, \sum_{\delta =1}^k \left( \ell -\flfrac{\delta }{2} \right) \right).
\end{eqnarray*}
Here, the summation is over the gain obtained when Player 1 chooses a vertex at distance $\delta$ from the body.
Evaluating the sum, we obtain
\[
Gain(S,C_{S}(k),Z(v_0))=
\begin{cases}
\frac{m}{mk+1} \left( k\ell-\frac{k^2}{4} \right) \text{ if } k \text{ is even} \\
\frac{m}{mk+1} \left( k\ell-\frac{k^2}{4}+\frac{1}{4} \right) \text{ if } k \text{ is odd}.
\end{cases}
\]

If Player 2 chooses a vertex $v_j$ with label $(d,s)$ where $s>0$ and $d>k$ the expected gain of Player 1 is
\begin{eqnarray}
\label{Equation1}
Gain(S,C_{S1}(k),Z(v_j)) &=&\frac{1}{mk+1} 
\left( \flfrac{1+d}{2}+n-\ell-1 \right. \notag\\
&+& \sum_{\delta =1}^{k} \left( \flfrac{\delta +d-1}{2} +n-\ell \right)  \notag \\
&+&(m-1)\,\left.\sum_{\delta=1}^{k} \left( \flfrac{-\delta + d+1}{2}+n-\ell -1 \right) \right).
\end{eqnarray}
Here, the first summation ranges over the gain obtained when Player 1 chooses a vertex at distance $\delta$ from the body on the same branch as the starting vertex of Player 2, and the second summation does the same for a vertex on a different branch. 
The expression above  increases with $d$. Thus, the minimum expected gain for Player 1 occurs when $d=k+1$. However, the substitution $d=k+1$ in (\ref{Equation1}) gives an expected gain which is greater than $Gain(S,C_S(k),Z(v_0))$. 

Finally, consider the case where Player 2 chooses a vertex $v_j$ with label $(d,s)$ where $s>0$ and $0<d\leq k$. Then the expected gain of Player 1 is

\begin{equation*}
\begin{split}
Gain(S,C_{S1}(k),Z(v_j))=\frac{1}{mk+1} \left( 
 \sum_{\delta =0}^{d-1} \left( \flfrac{\delta + d-1}{2}+n-\ell \right)  + \sum_{\delta =d+1}^{k} \left( \ell -\flfrac{\delta + d}{2} \right)\right.\\
+ (m-1) \left. \left( \sum_{\delta =1}^{ d-1} \left( \flfrac{-\delta + d+1}{2} + n-\ell -1 \right)+\ell  + \sum_{\delta = d+1}^{k} \left(\ell +\flfrac{-\delta + d+1}{2} \right) \right) \right).
\end{split}
\end{equation*}

In the above expression, the first line refers to the gain when Player 1 chooses a starting vertex on the same branch as Player 2, while the second line refers to the case where the starting vertices are on different branches.
We can show that the expected gain in this case is greater than $Gain(S,C_S(k),Z(v_0))$. Thus, $GGain(S,C_S(k))=Gain(S,C_S(k),Z(v_0))$.  

Maximizing $GGain(S,C_S(k))$ over $k$ gives $k^*=\frac{2\sqrt{\ell }}{\sqrt{m}}+\mathcal{O}(1)$ as the optimal integer choice for $k$ and $GGain(S,C_S(k^*))=\ell -\frac{\sqrt{\ell }}{\sqrt{m}}+\mathcal{O}(1)$. 

For the upper bound, we have the maximal gain of Player 1 when Player 2 has the strategy $C_S(k)$ with $k=0$. If $k=0$, the strategy of Player 2 is simply to choose the body of the spider. In this case, the maximal gain Player 1 can  obtain is $\ell $, the number of vertices in one leg.
\end{proof}

\section{Complete $m$-ary Trees}

A {\em complete m-ary tree} $(m \geq 2)$ of height $h$, which we will denote by $T(m,h)$, is a rooted tree in which every internal vertex has exactly $m$ children and all leaves have depth $h$.
The number of vertices in $T(m,h)$ is $n=\frac{m^{h+1}-1}{m-1}$. Let us identify a vertex $v_j$  by an ordered pair $(d,e)$, where $d$ is the depth of $v_j$, and $e$ is the position of the vertex  in levels $d$  if the vertices in the levels are numbered from left to right by $\{0,1,2,...,m^{d}-1\}$. By convention, the root of the tree will be identified by the ordered pair $(0,0)$.


In the following, we will use the notation $Z(d,e)$ to denote the pure strategy $Z(v_j)$ where $v_j$ has label $(d,e)$. 

We suggest the following safe strategy for Player 1 and opposing strategy for Player 2. The strategies have positive probabilities of choosing the root and the vertices in the first level of $T(m,h)$.
\begin{definition}
\label{DefStrategymu1}
Let the {\em strategy $\mu_{1}$} be a mixed strategy $(x_1,x_2,...,x_n)$ on $T(m,h)$ where 
$n=\frac{m^{h+1}-1}{m-1}$ and for all $i\in \{ 1,2,\dots ,n\}$,
\begin{equation*}
x_i=
\begin{cases}
\alpha_1 =\frac{m^h-1}{m^{h+2}-m^{h+1}+m^h-1} & \text{ if }v_i\text{ is the root}, \\
\beta_1 =\frac{(m-1)m^h}{m^{h+2}-m^{h+1}+m^h-1} & \text{ if }v_i\text{ has depth }1,\\
0 & \text{ if }v_i\text{ has depth } d>1.
\end{cases}
\end{equation*}
\end{definition}
The probabilities $\alpha_1$ and $\beta_1$ in the strategy $\mu_1$ were obtained by solving
\begin{align}
\label{mu1alphaandbeta}
\begin{split}
Gain(T(m,h),X_1,Z(0,0))&=Gain(T(m,h),X_1,Z(1,1))\\
\Leftrightarrow \alpha_1 \cdot 0 + m \beta_1 \cdot \left( \frac{m^h-1}{m-1}\right)
&=\alpha_1 \cdot m^h + \beta_1 \cdot 0 + (m-1) \beta_1 \cdot \left( \frac{m^h-1}{m-1} \right),
\end{split}
\end{align}
subject to the condition that $\alpha_1+m\beta_2=1$.

\begin{definition}
\label{DefStrategymu2}
Let the {\em strategy $\mu_2$} be a mixed strategy $(y_1,y_2,...,y_n)$ on $T(m,h)$ where $n=\frac{m^{h+1}-1}{m-1}$ and for all $i\in \{ 1,2,\dots ,n\}$,
\begin{equation*}
y_i=
\begin{cases}
\alpha_2 = \frac{(m-1)(m^{h+1}-m^h+1)}{m^{h+2}-m^{h+1}+m^h-1} & \text{ if }v_i\text{ is the root,}\\
\beta_2 = \frac{m^h-1}{m^{h+2}-m^{h+1}+m^h-1} & \text{ if } v_i\text{ has depth }1\\
0 & \text{ if }v_i\text{ has depth }d>1.
\end{cases}
\end{equation*}
\end{definition}

The probabilities $\alpha_2$ and $\beta_2$ in the strategy  $\mu_2$ were obtained by solving
\begin{align}
\label{mu2alphaandbeta}
\begin{split}
Gain(T(m,h),Z(0,0),\mu_2)&=Gain(T(m,h),Z(1,1),\mu_2) \\
\Leftrightarrow \alpha_2 \cdot 0 + m \beta_2 \cdot m^h &= \alpha_2 \cdot \left( \frac{m^h-1}{m-1} \right) + \beta_2 \cdot 0 + (m-1) \beta_2 \left( \frac{m^h-1}{m-1} \right),3 
\end{split}
\end{align}
subject to $\alpha_2+m\beta_2=1$.  In both these scenarios, it makes sense to equal the expected gains since the players do not want the give an advantage to their opponent of choosing a given vertex over another on which they assign a positive probability.

Considering the strategy $\mu_1$ as a safe strategy for Player 1 on $T(m,h)$ and the strategy  $\mu_2$ as an opposing strategy for Player 2 on $T(m,h)$ leads to the following result.
\begin{theorem}
\label{MaryTreeSafetyValue}
In the two-player game of Competitive Diffusion on  $T(m,h)$, the safety value of Player 1 is 
\begin{equation*}
\frac{(n-1)((m-1)n+1)}{n(m^2-m+1)+m-1}
\end{equation*}
where $n=\frac{m^{h+1}-1}{m-1}$. Moreover, Player 1 achieves the greatest gain with the safe strategy $\mu_1$, and the best opposing strategy for Player 2 is strategy $\mu_2$.
\end{theorem}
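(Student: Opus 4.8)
The plan is to certify both strategies at once by squeezing the safety value between the guaranteed gain of $\mu_1$ and the maximal gain of $\mu_2$. Since $GGain(T(m,h),X)\le value(A_{T(m,h)})\le MGain(T(m,h),Y)$ holds for every pair $X,Y$, it suffices to prove that $GGain(T(m,h),\mu_1)=MGain(T(m,h),\mu_2)=V$, where $V=\frac{(n-1)((m-1)n+1)}{n(m^2-m+1)+m-1}$ is the claimed value. This single chain of equalities pins the value to $V$ and simultaneously shows that $\mu_1$ is maxmin-optimal and $\mu_2$ is minmax-optimal.

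Before the two bounds I would record the basic counting tool. In competitive diffusion a vertex is captured by the player whose seed is nearest, with equidistant vertices turning grey, so $Gain(T,Z(u),Z(v))$ equals the number of vertices strictly closer to $u$ than to $v$. On a tree this is purely combinatorial: writing the $u$--$v$ path as $p_0=u,\dots,p_D=v$, every vertex $w$ projects to some $p_i$ and lands on Player 1's side exactly when $i<D/2$, so Player 1 collects the vertices hanging off $p_0,\dots,p_{\lceil D/2\rceil-1}$. For the lower bound I would then use that $\mu_1$ is invariant under all automorphisms fixing the root, so $Gain(T(m,h),\mu_1,Z(v))$ depends only on the depth $d$ of $v$; write it $g(d)$. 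Definition~\ref{DefStrategymu1} was built precisely from $g(0)=g(1)$ via (\ref{mu1alphaandbeta}), and substituting $m^{h+1}=(m-1)n+1$ and $m^h-1=\frac{(m-1)(n-1)}{m}$ reduces this common value to exactly $V$. What remains is to show $g(d)\ge g(1)$ for $2\le d\le h$, so that $GGain(T(m,h),\mu_1)=\min_d g(d)=g(0)=V$. I would obtain this by expanding $g(d)=\alpha_1\,(\text{gain when Player 1 seeds the root})+\beta_1\,(\text{total gain from the }m\text{ depth-one seeds})$, separating the unique depth-one seed lying on the root-to-$v$ path from the $m-1$ seeds off it, and counting captured vertices with the midpoint rule above.

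The hard part will be exactly this monotonicity step. The intuition is transparent: a deep seed for Player 2 is peripheral and controls only a small subtree, so Player 1 gains more as $d$ grows. Turning this into a clean inequality is the technical core, however, because one must verify, for every relative position of a depth-one seed with respect to the deep target $v$ and for all $m\ge 2$ and $2\le d\le h$, that the captured counts sum to at least $g(1)$. I expect the proof to hinge on showing $g(d)$ is nondecreasing in $d$, which in turn reduces to a parameter-free inequality after the distance-counting is carried out level by level.

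The upper bound is dual and should follow the same template. Since $\mu_2$ is also root-symmetric, $Gain(T(m,h),Z(u),\mu_2)$ depends only on the depth $d$ of $u$; call it $f(d)$. Equation (\ref{mu2alphaandbeta}) forces $f(0)=f(1)$, and the identical algebraic reduction shows this value is again $V$; indeed $Gain(T(m,h),Z(0,0),\mu_2)=m^{h+1}\beta_2$, which equals $g(0)$, so the two bounds are manifestly the same number. I would then prove $f(d)\le f(1)$ for $d\ge 2$ by the symmetric counting argument, namely that against the central strategy $\mu_2$ a deeper seed for Player 1 captures fewer vertices, giving $MGain(T(m,h),\mu_2)=\max_d f(d)=V$. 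Combining the two bounds yields $value(A_{T(m,h)})=V$ with $\mu_1$ and $\mu_2$ optimal as claimed.
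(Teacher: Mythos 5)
Your proposal is correct and follows the same skeleton as the paper's proof: squeeze the value between $GGain(T(m,h),\mu_1)$ and $MGain(T(m,h),\mu_2)$, use root-symmetry to reduce the opponent's pure responses to a function of depth alone, note that the defining equations (\ref{mu1alphaandbeta}) and (\ref{mu2alphaandbeta}) force the depth-$0$ and depth-$1$ payoffs to coincide at the common value $V=\frac{m^{h+1}(m^h-1)}{m^{h+2}-m^{h+1}+m^h-1}$, and then dispose of depths $d\geq 2$. The one place you genuinely diverge is that last step, which you flag as the technical core and propose to settle by proving $g(d)$ nondecreasing in $d$ (and $f(d)$ nonincreasing). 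The paper shows this monotonicity is unnecessary: a crude uniform bound suffices. For the lower bound, every pure strategy in the support of $\mu_1$ already gains at least $m^h$ against any seed $v$ of depth $d\geq 2$ --- by your own midpoint rule, a root seed keeps at least the root plus the $m-1$ first-level subtrees avoiding $v$, i.e.\ $1+(m-1)\frac{m^h-1}{m-1}=m^h$ vertices, and the depth-one seeds keep at least as much --- hence $g(d)\geq m^h$, and $m^h>V$ is a one-line comparison (the denominator of $V/m^h$ exceeds the numerator by $m^h(m-1)^2+m-1$). Dually, for $d\geq 2$ one has $f(d)\leq(\alpha_2+\beta_2)\frac{m^{h-1}-1}{m-1}+(m-1)\beta_2\frac{m^h-1}{m-1}<V$, since against the root and the on-path depth-one vertex a deep seed captures at most the subtree hanging below depth $2$, and against each off-path depth-one seed at most the whole first-level subtree containing it. So your plan would work, but it buys finer information (full monotonicity of $g$ and $f$) at the cost of a case analysis over the relative positions of seeds that the theorem does not need; all that is required is $g(d)>V>f(d)$ for $d\geq 2$, and the coarse bounds above deliver this with essentially no casework, which is exactly the paper's shortcut.
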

\begin{proof}
As a lower bound, we have the guaranteed gain of Player 1 with the strategy $\mu_1$. It is determined similarly as for the spiders, by taking the minimum gain for Player 1 over all pure opposing strategies $Z(d,e)$
Due to symmetry, we only need to consider the root of the tree and one vertex of each level as possible starting vertices for Player 2. By (\ref{mu1alphaandbeta}) we have that
\begin{align}
Gain(T(m,h),\mu_1,Z(0,0))&= Gain(T(m,h),\mu_1,Z(1,e))\notag \\
&= \alpha_1 \cdot 0 + m \beta_1 \left( \frac{m^h-1}{m-1} \right)\notag \\
&=\frac{m^{h+1}(m^h-1)}{m^{h+2}-m^{h+1}+m^h-1}.
\label{Equation2}
\end{align}

If $2 \leq d \leq h$, the expected gain of Player 1 is greater than $m^h$ since \\ $Gain(T(m,h),Z(0,0), Z(d,e))$ and $Gain(T(m,h),Z(1,e'),Z(d,e))$ for $0<e'\leq m-1$ and $0 \leq e \leq m^d-1$ are both greater than $m^h$. Moreover, $m^h$ is greater than (\ref{Equation2}). Thus, $GGain(T(m,h),\mu_1)=Gain(T(m,h),\mu_1,Z(0,0))$. 

For the upper bound, we have the maximum gain of Player 1 when Player 2 chooses the strategy $\mu_2$. It is determined by taking the maximum gain over all pure strategies $Z(d,e)$ for Player 1. 
By (\ref{mu2alphaandbeta}) we have that
\begin{align*}
Gain(T(m,h),Z(0,0),\mu_2)&=Gain(T(m,h),Z(1,e),\mu_2)\\
&=\alpha_2 \cdot 0 + m \beta_2 \cdot m^h\\
&=\frac{m^{h+1}(m^h-1)}{m^{h+2}-m^{h+1}+m^h-1}.
\end{align*}
Furthermore, for $2 \leq d \leq h$,
\begin{equation*}
\begin{split}
Gain(T(m,h),Z(d,e),\mu_2) \leq (\alpha_2 + \beta_2) \left( \frac{m^{h-1}-1}{m-1} \right)
+ (m-1)\beta_2 \left(\frac{m^h-1}{m-1}\right).
\end{split}
\end{equation*}
Substituting the expressions for  $\alpha_2$ and $\beta_2$, we can show that this gain is smaller than  $Gain(T(m,h),Z(0,0),\mu_2)$. Thus, $MGain(T(m,h),\mu_2)=Gain(T(m,h),Z(0,0),\mu_2)$. 
Finally, we have 
\begin{align*}
GGain(T(m,h),\mu_1)=MGain(T(m,h),\mu_2)=\frac{(n-1)((m-1)n+1)}{n(m^2-m+1)+m-1}
\end{align*}
since $n=\frac{m^{h+1}-1}{m-1}$.
\end{proof}

\section{An Algorithm to Find Safe Strategies for Trees in General}

In this section we exploit our earlier results to develop a heuristic algorithm to find a good safe strategy for any tree. We assume our tree to be centroidal. Such a tree has the centroid as its root, and $n-1$ vertices divided amongst a number of branches. For bicentroidal trees, we can adopt a similar approach by considering one of the two vertices of the centroid to be the root. 

The branches extending from the centroid can have different configurations. In spiders, all the branches at the centroid are non-trivial paths. We showed that a safe strategy which chooses with positive probability vertices on the branches up to a certain distance has a guaranteed gain near the safety value. On the other hand, the branches in a complete tree are more clustered. We showed that a safe strategy which has a guaranteed gain equal to the safety value only chooses the root and the first vertex of each branch. This suggests considering different types of branches at the centroid and defining accordingly a distribution of probabilities on the vertices in the branch. 

\subsection{Branches at the Centroid}
We distinguish three different types of branches at the centroid.

\begin{definition}
A {\em thick branch} at the centroid is a branch for which we have 
\begin{equation*}
w_2 \geq n-w_1+\frac{w_1^2}{n}
\end{equation*} 
where $w_2$ is the second lowest weight in the branch and $w_1$ is the lowest weight in the branch.

A {\em medium branch} at the centroid is a branch for which we have 
\begin{equation*}
w_2 <n-w_1+\frac{w_1^2}{n} \text{ and } w_3 \geq n-w_2+\frac{w_2^2+(w_2-w_1)^2}{n+(w_2-w_1)}
\end{equation*} 
where $w_3$ is the third lowest weight in the branch, $w_2$ is the second lowest weight in the branch and $w_1$ is the lowest weight in the branch.

A {\em thin branch} at the centroid is a branch for which we have 
\begin{equation*}
w_2 <n-w_1+\frac{w_1^2}{n} \text{ and } w_3 < n-w_2+\frac{w_2^2+(w_2-w_1)^2}{n+(w_2-w_1)}
\end{equation*} 
where $w_3$ is the third lowest weight in the branch, $w_2$ is the second lowest weight in the branch and $w_1$ is the lowest weight in the branch.
\end{definition}

By considering that the weight of a vertex is the number of edges in the branch in which lies the centroid (see Lemma \ref{TreeCentroidinBranch}), one can show that the vertex with the lowest weight in a branch is adjacent to the centroid and that the vertex with the second lowest weight is adjacent to the first. The next vertex with the lowest weight in the branch could be adjacent to either of these vertices. However, condition (\ref{ConditionThinBranch}) assures us that the third vertex with the lowest weight in a thin branch is adjacent to the vertex with the second lowest weight. It would be impossible to have condition (\ref{ConditionThinBranch}) and the second and third vertices with lowest weights on two different branches at the vertex with the lowest weight, since this would imply $n-1=w_1+(n-w_2)+(n-w_3)$. 

Following an approach similar to that involved in finding the strategies $\mu_1$ and $\mu_2$, we give an algorithm which, given any centroidal tree $T$ of size $n$, assigns a distribution of probabilities on the vertices of a branch which depends on the type of the branch. Let $B_i$ be a branch  of $T$ and let $u_i$, $t_i$ and $s_i$ be the vertices in $B_i$ such that $w(u_i)\leq w(t_i)\leq w(s_i) \leq w(v_k)$ for any other vertex $v_k$ in $B_i$. As explained in the previous paragraph, $u_i$ is adjacent to the root, and $t_i$ is adjacent to $u_i$.  Vertex $s_i$ could be adjacent to $u_i$ or $t_i$, but in case of a thin branch, $s_i$ is adjacent to $t_i$. The algorithm proposed here will assign a probability $\alpha$ to the root (centroid), and, in each branch $B_i$, assigns probabilities $\beta_i$, $\gamma_i$ and $\delta_i$ to the vertices $u_i$, $t_i$ and $s_i$ and probability zero to all other vertices in $B_i$.

The probabilities $\beta_i$, $\gamma_i$ and $\delta_i$ are given below. The expressions are given in terms of the weights of vertices $u_i$, $t_i$ and $s_i$ and of  $\alpha$, the probability assigned to the root  vertex. The expression depends on whether the branch $B_i$ is a thin, medium, or thick branch. 

\begin{itemize}
\item[(i)] If $B_i$ is a thin branch,

\begin{align}
\label{ThinbranchDistribution} 
\begin{split}
&\beta_i=\left(\frac{\overline{w}(t_i)(w(u_i)\overline{w}(s_i)+(w(t_i)-w(s_i))(w(t_i)-w(u_i)))}{\overline{w}(s_i)\overline{w}(u_i)\overline{w}(t_i)+w(s_i)w(t_i)(w(s_i)-w(t_i))}\right) \alpha  \\
&\gamma_i = \left(\frac{w(t_i)}{\overline{w}(t_i)}\right) \beta_i  \\
&\delta_i = \left( \frac{w(s_i)}{\overline{w}(s_i)} \right) \gamma_i + \left( \frac{w(t_i)-w(u_i)}{\overline{w}(s_i)}\right) \alpha 
\end{split} 
\end{align}
\item[(ii)] If $B_i$ is a medium branch, 
\begin{align}
\label{MediumbranchDistribution}
\begin{split}
& \beta_i = \left(\frac{w(u_i)}{\overline{w}(u_i)}\right) \alpha  \\
&  \gamma_i = \left(\frac{w(t_i)}{\overline{w}(t_i)}\right) \beta_i \\
&  \text{ } \delta_i = 0 
\end{split}
\end{align}
\item[(iii)] If $B_i$ is a thick branch, 
\begin{align}
\label{ThickbranchDistribution}
\begin{split}
&  \beta_i=\left(\frac{w(u_i)}{\overline{w}(u_i)}\right) \alpha  \\
& \gamma_i = 0 \\
& \delta_i = 0 
\end{split} 
\end{align}
\end{itemize}

If $B_i$ is a thin branch, the suggested probabilities were obtained by equalling the expected gains of Player 1 when Player 2 chooses the centroid, the vertex $u_i$, the vertex $t_i$ and the vertex $s_i$ and solving for $\beta_i$, $\gamma_i$ and $\delta_i$ knowing that $\alpha+\beta_i+\gamma_i+\delta_i=1$. Similarly, the suggested probabilities if $B_i$ is a medium branch were obtained by first setting $\delta_i$ to zero, then equalling the expected gains of Player 1 when Player 2 chooses the centroid, the vertex $u_i$ and the vertex $t_i$. Finally, if $B_i$ is a thick branch, we set $\gamma_i$ and $\delta_i$ to zero and we equal the expected gains of Player 1 when Player 2 chooses the centroid and the vertex $u_i$. The distribution of probabilities in the branches will be used in the suggested safe strategy for Player 1.

\subsection{Centroidal Safe Strategy (CSS) Algorithm}
A centroidal tree can have diverse proportions of thin, medium and thick branches. Moreover, unlike spiders and complete trees, the number of vertices varies from branch to branch. Thus, some branches might have very few vertices compared to other branches. Therefore, it would be unreasonable to suggest a safe strategy for Player 1 which has positive probability of choosing vertices in every branch at the centroid. As such, we define an algorithm to suggest a safe strategy for Player 1. The algorithm starts by assigning positive probabilities to vertices on one branch and then disperses probabilities on other branches as long as it is beneficial to Player 1. To determine whether the assignment is beneficial, we define the criterion of a branch. This criterion will be used in the algorithm to order the branches and determine when the dispersion of probabilities on branches should stop.

\begin{definition}
\label{DefCriterionBranch}
For a branch $B$ in a centroidal tree, we define the {\em criterion} of $B$, $Cr(B)$, as follows. If $B$ has less than three vertices, $Cr(B)=0$. Else, let $u$, $t$ and $s$ be the vertices in $B$ such that $w(u)\leq w(t)\leq w(s) \leq w(v)$ for any other vertex $v$ in $B$, and $u$ is a neighbour of the centroid, $t$ a neighbour of $u$, and $s$ a neighbour of $t$ or $u$. Then,
\end{definition}
\begin{align}
\begin{split}
&Cr(B)=\\ &\begin{cases}
\overline{w}(u) & \text{ if } B \text{ is a thick branch,} \\
\left(\frac{\overline{w}(t)}{n}\right)\overline{w}(u)+\left(\frac{w(t)}{n}\right)\overline{w}(t) & \text{ if } B \text{ is a medium branch,}\\
\frac{w(t)\overline{w}(t)(n^2-nw(s)-w(s)w(t)+w(t)^2+2w(s)w(u)-w(t)w(u))}{nw(t)\overline{w}(s)+w(u)w(t)(-n+w(s)+w(t))+\overline{w}(t)w(u)^2} & \text{ if } B \text{ is a thin branch.}
\end{cases}
\end{split}
\end{align}

\begin{algorithm} Centroidal Safe Strategy (CSS) Algorithm
\noindent \par\nobreak \hrule\vspace{6pt}
\label{CSSAlgorithm}
\begin{itemize}
\item[] \textsc{INPUT:} Centroidal tree, $T$, with $d$ branches at the centroid.
\item[] \textsc{STEP 1:} Order the branches $\{B_1,B_2,...,B_d\}$ such that $Cr(B_i)\geq Cr(B_{i+1})$ for all $1 \leq i \leq d-1$.

\item[] \textsc{STEP 2:} Build a sequence of safe strategies $\sigma_i$ for Player 1 by considering each branch in order.
\begin{itemize}
\item[(a)] If $i=0$, form $\sigma_0$, a strategy where  the centroid is chosen with probability $\alpha$. 


\item[(b)] If $i>0$, form, $\sigma_i$, a safe strategy in which the centroid is chosen with probability $\alpha$, the probabilities of choosing the vertices of the branches $B_k$, $1\leq k<i$ are the same in terms of $\alpha$ as in the strategy $\sigma_{i-1}$ and the probabilities of choosing the vertices in the branch $B_i$ are as given by (\ref{ThinbranchDistribution}), if $B_i$ is a thin branch, by (\ref{MediumbranchDistribution}), if $B_i$ is a medium branch and by (\ref{ThickbranchDistribution}),  if $B_i$ is a thick branch.
\end{itemize}

\item[] \textsc{STEP 3:} Determine $\alpha$ by solving $\alpha +\sum_{j=1}^i \left(\beta_j+ \gamma_j + \delta_j \right) =1$ and calculate the expected gain of Player 1 with the strategy $\sigma_i$ when Player 2 chooses the centroid,
\begin{equation}
\label{Equation5}
Gain(T,\sigma_i,Z(c))=\alpha \cdot 0+ \sum_{j=1}^i \left(\beta_j \cdot \overline{w}(u_j)+\gamma_j \cdot \overline{w}(t_j)+\delta_j \cdot \overline{w}(t_j)\right).
\end{equation}
\item[] \textsc{STEP 4:}
\begin{itemize}
\item[(a)] If $i<d$ and $Cr(B_{i+1}) \geq Gain(T,\sigma_i,Z(c))$ return to \textsc{STEP 2} with $i=i+1$.
\item[(b)] If $i<d$ and $Cr(B_{i+1}) \leq Gain(T(n),\sigma_i,Z(c))$ or $i=d$, return the strategy $\sigma_i$ and the guaranteed gain, $Gain(T,\sigma_i,Z(c))$.
\end{itemize} 
\item[] \textsc{OUTPUT:} Safe strategy for Player 1, $\sigma_i$, with guaranteed gain
\begin{equation}
GGain(T,\sigma_i)=Gain(T,\sigma_i,Z(c)).
\end{equation}
\end{itemize}
\noindent \par\nobreak \hrule\vspace{6pt}
\end{algorithm}
A few explanations on the algorithm are needed. In \textsc{Step} 4, we return to \textsc{Step 2} to disperse probabilities on another branch if $Cr(B_{i+1})\geq Gain(T,\sigma_i,Z(c))$. The criterion being greater than the current expected gain results in an increase of the expected gain of Player 1. Thus, the strategies $\sigma_i$ give increased gain. This is shown in the following lemma. 

\begin{lemma}
\label{LemmaConditionCriterion}
In the CSS algorithm, if $Cr(B_{i+1})\geq Gain(T,\sigma_i,Z(c))$, then 
\begin{align}
Gain(T,\sigma_i,Z(c)) \leq Gain(T,\sigma_{i+1},Z(c))\leq Cr(B_{i+1}).
\end{align}
\end{lemma}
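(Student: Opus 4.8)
The plan is to reduce the statement to the elementary \emph{mediant inequality}: if $b,d>0$ and $\frac{a}{b}\le\frac{c}{d}$, then $\frac{a}{b}\le\frac{a+c}{b+d}\le\frac{c}{d}$. The entire difficulty is to recast $Gain(T,\sigma_i,Z(c))$, $Gain(T,\sigma_{i+1},Z(c))$ and $Cr(B_{i+1})$ into a form where this inequality applies directly.

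First I would isolate the dependence on $\alpha$. By (\ref{ThinbranchDistribution}), (\ref{MediumbranchDistribution}) and (\ref{ThickbranchDistribution}) the probabilities $\beta_j,\gamma_j,\delta_j$ assigned to branch $B_j$ are fixed multiples of $\alpha$ whose coefficients depend only on the weights $w(u_j),w(t_j),w(s_j)$; moreover, by \textsc{Step} 2(b) these coefficients are the same in $\sigma_i$ and $\sigma_{i+1}$, so that passing from $\sigma_i$ to $\sigma_{i+1}$ merely re-normalizes $\alpha$. Writing $\beta_j=b_j\alpha$, $\gamma_j=g_j\alpha$, $\delta_j=d_j\alpha$, I introduce the two $\alpha$-independent per-branch quantities
\[
N_j:=b_j\,\overline{w}(u_j)+(g_j+d_j)\,\overline{w}(t_j),\qquad D_j:=b_j+g_j+d_j .
\]
Using the normalization $\alpha\bigl(1+\sum_{j=1}^i D_j\bigr)=1$ from \textsc{Step} 3 together with (\ref{Equation5}), a short calculation gives the closed form
\[
Gain(T,\sigma_i,Z(c))=\frac{\sum_{j=1}^i N_j}{1+\sum_{j=1}^i D_j}.
\]
Setting $P:=\sum_{j=1}^i N_j$ and $Q:=1+\sum_{j=1}^i D_j$, this reads $Gain(T,\sigma_i,Z(c))=P/Q$, while $Gain(T,\sigma_{i+1},Z(c))=(P+N_{i+1})/(Q+D_{i+1})$ is exactly the mediant of $P/Q$ and $N_{i+1}/D_{i+1}$.

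The key step, and the main computational obstacle, is the identity $Cr(B_{i+1})=N_{i+1}/D_{i+1}$, which I would prove by direct substitution of the branch distributions into Definition \ref{DefCriterionBranch}, one branch type at a time. For a thick branch $b_{i+1}=w(u_{i+1})/\overline{w}(u_{i+1})$ and $g_{i+1}=d_{i+1}=0$, so both sides reduce to $\overline{w}(u_{i+1})$; for a medium branch both sides collapse to $\frac{\overline{w}(t_{i+1})}{n}\bigl(\overline{w}(u_{i+1})+w(t_{i+1})\bigr)$ after using $w(t)+\overline{w}(t)=n$. The thin case is where the real work lies: one must substitute the three-term distribution (\ref{ThinbranchDistribution}) into $N_{i+1}/D_{i+1}$ and verify, by a lengthy but mechanical simplification, that it equals the rational expression defining the criterion of a thin branch. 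This is the only genuinely heavy computation, and it is precisely what makes the criterion formulas look as complicated as they do.

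With these pieces in place the lemma is immediate. The hypothesis $Cr(B_{i+1})\ge Gain(T,\sigma_i,Z(c))$ is exactly $\frac{N_{i+1}}{D_{i+1}}\ge\frac{P}{Q}$, and since $Q\ge 1>0$ and $D_{i+1}>0$ (branch $B_{i+1}$ carries positive mass, as $b_{i+1}=w(u_{i+1})/\overline{w}(u_{i+1})>0$ because $0<\overline{w}(u_{i+1})<w(u_{i+1})$ by Lemma \ref{Lemmanminusweight}), the mediant inequality yields
\[
\frac{P}{Q}\;\le\;\frac{P+N_{i+1}}{Q+D_{i+1}}\;\le\;\frac{N_{i+1}}{D_{i+1}},
\]
which is precisely $Gain(T,\sigma_i,Z(c))\le Gain(T,\sigma_{i+1},Z(c))\le Cr(B_{i+1})$.
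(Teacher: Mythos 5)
Your proof is correct and is essentially the paper's own argument in different notation: your mediant inequality is exactly the convex-combination identity the paper derives in (\ref{GainEqn}) (your $Q/(Q+D_{i+1})$ is the paper's $\nu=\alpha^{(i+1)}/\alpha^{(i)}$, since $1/\alpha^{(i)}=1+\sum_{j\leq i}D_j$), and your key identity $Cr(B_{i+1})=N_{i+1}/D_{i+1}$ is precisely what the paper uses in the thick case in the form $Gain(T,Z(u_{i+1}),Z(c))=\overline{w}(u_{i+1})=Cr(B_{i+1})$, with medium and thin branches left to an asserted ``similar'' argument. If anything your write-up is slightly more uniform than the paper's, since you state the identity for all three branch types and verify thick and medium explicitly, deferring only the thin-case algebra as a mechanical simplification --- the same level of detail the paper itself supplies.
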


\begin{proof}
Suppose $B_{i+1}$ is thick branch. If $B_{i+1}$ is a medium or thin branch, the proof is similar. We refer here to  the solved value of $\alpha$ in the strategy $\sigma_i$ by $\alpha^{(i)}$, and let strategy $\sigma_i$ be represented by the vector $\alpha^{(i)}(x_1^{(i)}, x_2^{(i)}, ..., x_n^{(i)})$. Let $v_k$ be the index of the vertex $u_{i+1}$ of $B_{i+1}$.
To form the strategy $\sigma_{i+1}$, a probability of $\beta_{j+1}=\left(\frac{w(u_{i+1})}{\overline{w}(u_{i+1})}\right) \alpha$  was assigned to the vertex $v_k$ and we have a new solved value for $\alpha$, $\alpha^{(i+1)}$. Moreover, we have the following relation between the probability vectors. 
\begin{align*}
x_j^{(i+1)}=
\begin{cases}
x_j^{(i)} & \text{if } j \not = k\\
\frac{w(u_{i+1})}{\overline{w}(u_{i+1})}, & \text{if } j=k,
\end{cases}
\end{align*}
and
\begin{equation}
\label{eqnalpha}
\frac{1}{\alpha^{(i+1)}}=\sum_{j=1}^n x_j^{(i+1)}=\frac{1}{\alpha^{(i)}}+\frac{w(u_{i+1})}{\overline{w}(u_{i+1})}.
\end{equation}
If we compare the expressions for $Gain(T,\sigma_i,Z(c))$ and $Gain(T,\sigma_{i+1},Z(c))$, we have 
\begin{align*}
\begin{split}
Gain(T,\sigma_{i+1},Z(c))&=\sum_{j=1}^n \alpha^{(i+1)} x_j^{(i+1)} \cdot Gain(T,Z(v_j),Z(c))\\
&=\sum_{j=1,j\not= k}^n \alpha^{(i+1)} x_j^{(i)} Gain(T,Z(v_j),Z(c))\\&+\left(\frac{w(u_{i+1})}{\overline{w}(u_{i+1})}\right) \alpha^{(i+1)}  Gain(T,Z(v_k),Z(c)).
\end{split}
\end{align*}
Now, $x_k^{(i)}=0$, since initially, there were no positive probabilities on the vertices of the branch $B_{i+1}$. 
Remembering that $v_k=u_{i+1}$, we obtain 
\begin{align}
\label{GainEqn}
\begin{split}
Gain(T,\sigma_{i+1},Z(c))&=\frac{\alpha^{(i+1)}}{ \alpha^{(i)}} Gain(T,\sigma_i,Z(c)) \\
&+ \left(\frac{w(u_{i+1})}{\overline{w}(u_{i+1})}\right)\alpha^{(i+1)}   Gain(T,Z(u_{i+1}),Z(c)) \\
&=\nu \,Gain(T,\sigma_i,Z(c)) 
+ (1-\nu) Gain(T,Z(u_{i+1}),Z(c)) ,
\end{split}
\end{align}
where $\nu =\frac{\alpha^{(i+1)}}{ \alpha^{(i)}}$, and the last step uses (\ref{eqnalpha}). 

From Step 4 of the algorithm, we know that, if the algorithm continues to form $\sigma_{i+1}$, then $Cr(B_{i+1}) \geq Gain(T,\sigma_{i},Z(c))$. 
Since $B_{i+1}$ is a thick branch, $Cr(B_{i+1})=\overline{w}(u_{i+1})$ (see Definition \ref{DefCriterionBranch}). Recall from Lemma \ref{TreeCentroidinBranch}, that $w(u_{i+1})$ is the number of edges in the branch at $u_{i+1}$ in which lies the centroid. Thus, there are $\overline{w}(u_{i+1})$ vertices in the branch $B_{i+1}$ and so 
\[
Gain(T,Z(u_{i+1}),Z(c))=\overline{w}(u_{i+1})=Cr(B_{i+1})\geq Gain(T,\sigma_{i},Z(c)).
\]

By  (\ref{GainEqn}), $Gain(T,\sigma_{i+1},Z(c))$ is a convex combination of $Gain(T,Z(u_{i+1}),Z(c))$ and $Gain(T,\sigma_i,Z(c)) $. Thus the inequalities stated in the lemma follow. 

\end{proof}

The lemma allows us to explain the ordering of the branches. After $k$ loops in the algorithm, suppose we have two branches at the centroid, $B_i$ and $B_j$ such that $Cr(B_i)\geq Cr(B_j) \geq Gain(T,\sigma_k,Z(c))$. By Lemma \ref{LemmaConditionCriterion}, we know that adding positive probabilities on either of the branches $B_i$ and $B_j$ will increase the expected gain of Player 1. Moreover, the resulting expected gains will not surpass $Cr(B_i)$ and $Cr(B_j)$ respectively. If we start by adding the probabilities on the branch $B_j$, the branch $B_i$ remains a candidate to increase the expected gain once more, since $Cr(B_i)\geq Cr(B_j)$. Let $\nu$ be the resulting expected gain of Player 1 with the two branches added. On the other hand, if we start with  branch $B_i$, the resulting expected gain might or might not be lower than $Cr(B_j)$. If it is, we can add the branch $B_j$ to get the expected gain $\nu$. If it is not, then the resulting expected gain with only the branch $B_i$ is greater than $\nu$. Thus, it is always advantageous to include the branch with the largest criterion first. For this reason, we order the branches in decreasing order of criterion in the algorithm. 

Lastly, we must show that the minimal expected gain of Player 1 with the strategy $\sigma_i$ is in obtained when Player 2 chooses the centroid.
\begin{theorem}
\label{ThmTreesAlgorithmGGain}
Let $T$ be a centroidal tree of size $n$ with $d$ branches at the centroid. Suppose we apply the Safe Strategy Algorithm to $T$ and we get the mixed strategy $\sigma_k$ of Player 1 as output. Then, 
\begin{equation*}
GGain(T,\sigma_k)=Gain(T,\sigma_k,Z(c))
\end{equation*}
where $c$ is the centroid of $T
$.
\end{theorem}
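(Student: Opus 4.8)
The plan is to prove the equivalent statement that $Gain(T,\sigma_k,Z(y)) \geq Gain(T,\sigma_k,Z(c))$ for every vertex $y \in V(T)$; by \eqref{GuaranteedGain} this immediately gives $GGain(T,\sigma_k) = Gain(T,\sigma_k,Z(c))$. The first thing I would establish is a clean formula for the gain of two pure strategies on a tree: for vertices $a,b$,
\[
Gain(T,Z(a),Z(b)) = |\{x \in V(T) : d(a,x) < d(b,x)\}|,
\]
where $d$ is graph distance. The structural fact making this exact is that the ``closer to $a$'' set is a subtree containing $a$: if $d(a,x)<d(b,x)$ and $y$ lies on the path from $a$ to $x$, then $d(a,y)<d(b,y)$ as well, since $d(a,x)=d(a,y)+d(y,x)$ and $d(b,x)\le d(b,y)+d(y,x)$. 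Hence no grey vertex ever blocks Player 1 from a vertex she would otherwise capture, and the gain is exactly the cardinality above. I would then split the analysis according to where $y$ sits relative to the support of $\sigma_k$, which consists of the centroid $c$ together with the vertices $u_i,t_i,s_i$ in each selected branch $B_i$, $1\le i\le k$.

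The first case is $y\in\{u_i,t_i,s_i\}$ for some selected branch. Here $Gain(T,\sigma_k,Z(y))=Gain(T,\sigma_k,Z(c))$ holds \emph{by construction}: the probabilities $\beta_i,\gamma_i,\delta_i$ of \eqref{ThinbranchDistribution}--\eqref{ThickbranchDistribution} were defined precisely by equating the expected gain against $Z(c)$ with the expected gain against each of these vertices, so this case reduces to citing the definitions.

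The second case is $y$ in an \emph{unselected} branch $B_j$ (so $j>k$ and $\sigma_k$ puts zero probability on $B_j$). I would argue term by term over the support. Every support vertex $a$ lies outside $B_j$, so the path from $a$ to $y$ runs through $c$; for $x\notin B_j$ we have $d(y,x)=d(c,x)+d(c,y)\ge d(c,x)$, which makes the inequality $d(a,x)<d(y,x)$ \emph{weaker} than $d(a,x)<d(c,x)$, so Player 1's captured set outside $B_j$ only grows as Player 2 moves from $c$ to $y$; inside $B_j$ her capture grows from zero. Thus $Gain(T,Z(a),Z(y))\ge Gain(T,Z(a),Z(c))$ for each support vertex $a$, and in particular the centroid term jumps from $\alpha\cdot 0$ to a positive value. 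Summing against the weights of $\sigma_k$ yields $Gain(T,\sigma_k,Z(y))\ge Gain(T,\sigma_k,Z(c))$. I note that this case uses neither the criterion nor the stopping rule.

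The remaining case is $y$ in a \emph{selected} branch $B_i$ with $\sigma_k(y)=0$, i.e.\ a vertex deeper than $u_i,t_i,s_i$, and this is where I expect the main obstacle. For the part of the support outside $B_i$, the monotonicity of the second case applies verbatim. The difficulty is the support inside $B_i$, namely $u_i,t_i,s_i$: now Player 2's vertex and Player 1's vertices lie in the same branch, so pushing $y$ deeper does not uniformly increase the distances $d(a,x)$ that govern the capture sets, and one must use the internal geometry of $B_i$. I would handle this exactly as in the spider and complete-tree arguments (Theorems \ref{SpiderS1SafetyValueBounds} and \ref{MaryTreeSafetyValue}): reduce to the shallowest zero-probability vertex of $B_i$ and show, by a direct computation invoking the defining inequalities of a thin, medium, or thick branch, that the resulting gain is at least the equalized value $Gain(T,\sigma_k,Z(c))$. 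Using Lemmas \ref{TreeCentroidinBranch} and \ref{Lemmanminusweight} to rewrite all the relevant gains in terms of the weights $w(u_i),w(t_i),w(s_i)$, and checking that the branch-type thresholds force the minimizing position inside $B_i$ to occur at one of $u_i,t_i,s_i$, is the principal computational burden of the proof.
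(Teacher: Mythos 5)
Your overall skeleton (minimize over pure responses of Player 2, split cases by where $y$ sits relative to the support of $\sigma_k$) matches the paper's appendix proof, and two ingredients are sound: your explicit capture-set formula $Gain(T,Z(a),Z(b))=|\{x: d(a,x)<d(b,x)\}|$ with the path-convexity argument is correct on trees (the paper uses it only implicitly), and your monotonicity argument for unselected branches is essentially the paper's treatment of that case. But your first case contains a genuine error. The gains $Gain(T,\sigma_k,Z(u_i))$, $Gain(T,\sigma_k,Z(t_i))$, $Gain(T,\sigma_k,Z(s_i))$ are \emph{not} equal to $Gain(T,\sigma_k,Z(c))$ ``by construction.'' The probabilities in (\ref{ThinbranchDistribution})--(\ref{ThickbranchDistribution}) were derived by equalizing gains for a single branch considered \emph{in isolation}, under the normalization $\alpha+\beta_i+\gamma_i+\delta_i=1$ (the paper states this, and the conclusion explicitly flags ``the branches being considered isolated''). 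In $\sigma_k$ several branches carry probability, and when Player 2 moves from $c$ into branch $B_r$ the cross-branch payoffs change: for instance, Player 1's payoff at $t_j$ in a different medium or thin branch rises from $\overline{w}(t_j)$ (against $Z(c)$) to $\overline{w}(u_j)$ (against $Z(u_r)$), so one gets $Gain(T,\sigma_k,Z(u_r))\geq Gain(T,\sigma_k,Z(c))$ with strict inequality in general. Only the homogeneous within-branch identities such as (\ref{Equation15}) survive the renormalization of $\alpha$. Your conclusion is salvageable because the theorem needs only the inequality $\geq$, but establishing it requires exactly the term-by-term weight comparisons that occupy cases (i)--(iii) of the paper's appendix, using $\overline{w}(u_j)>\overline{w}(t_j)$ as in (\ref{ProofTrees1}) and $w(u_r)>\overline{w}(u_j)$ as in (\ref{ProofTrees2}); this work cannot be replaced by citing the definitions.

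Your third case, which you correctly identify as the computational core, is only deferred, not proved. The paper's case (iv) carries it out by splitting according to whether Player 2's vertex $v_j$ is a descendant of $u_r$, of $t_r$, or of $s_r$, lower-bounding each pure gain ($Gain(T,Z(u_r),Z(v_j))\geq w(u_r)$, etc.), and then combining these bounds via Lemma \ref{Lemmanminusweight} ($w(v)>\overline{w}(v)$ off the centroid) and (\ref{Equation15}) to dominate the centroid gain; note it bounds arbitrary deeper $v_j$ directly, rather than reducing to ``the shallowest zero-probability vertex'' as you propose --- a reduction you would still have to justify, since moving Player 2 shallower does not monotonically decrease her threat against a support spread over $u_i,t_i,s_i$. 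So the proposal has the right architecture but leaves unproved (and in one place misstates as trivial) the two inequality verifications that constitute essentially all of the paper's actual proof.
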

\begin{proof} 
Suppose $T$ has $k_1$ thick branches, $\{i^{Thk}_1,i^{Thk}_2,...,i^{Thk}_{k_1}\} \in \{1,2,3,...,k\}$, $k_2$ medium branches,
$\{i^{Med}_1,i^{Med}_2,...,i^{Med}_{k_2}\} \in \{1,2,3,...,k\}$ and $k_3$ thin branches,
$\{i^{Thn}_1,i^{Thn}_2,...,i^{Thn}_{k_3}\}$ $ \in \{1,2,3,...,k\}$ from the set of branches $\{B_1,B_2,...,B_k\}$, $k_1+k_2+k_3=k$.

Proving this theorem consists of determining the expected gain of Player 1 over all the possible starting vertices for Player 2 and showing that the minimum occurs when Player 1 chooses the centroid.  If Player 2 chooses to start with the centroid, the expected gain of Player 1 with the strategy $\sigma_k$ is 
\begin{align}
\label{ProofTrees7}
\begin{split}
Gain(T,\sigma_k,Z(c)) &=\alpha \cdot 0 + \sum_{j=i^{Thk}_1}^{i^{Thk}_{k_1}} \beta_j\cdot\overline{w}(u_j)+\sum_{j=i^{Med}_1}^{i^{Med}_{k_2}} \left( \beta_j \cdot\overline{w}(u_j) + \gamma_j \cdot\overline{w}(t_j) \right) \\
&+\sum_{j=i^{Thn}_1}^{i^{Thn}_{k_3}} \left( \beta_j \cdot\overline{w}(u_j)+ \gamma_j \cdot\overline{w}(t_j)+ \delta_j \cdot\overline{w}(t_j)\right).
\end{split}
\end{align}

The gain for all other pure strategies of Player 2 can be determined from the definitions given earlier. The details of the proof are highly technical, and can be found in the Appendix. 
\end{proof}

\section{Experimental Assessment of the CSS Algorithm}

As a last section, we apply the CSS algorithm on some examples of centroidal trees in order to evaluate its guaranteed gain.  We generated random centroidal trees with $n=100$ and $n=1000$ vertices. To evaluate the proximity of the guaranteed gain to the safety value, we calculate its difference to the maximal gain of Player 1 against a strategy for Player 2 which chooses with positive probabilities the centroid and some of the vertices at distance 1 and 2 from the centroid. As explained in the first section, any opposing strategy for Player 2 gives an upper bound on the safety value. 

The trees were generated using Maple$^{\text{\texttrademark}}$ \cite{MAPLE16} and the computation of the CSS algorithm was carried out using MATLAB$^{\text{\textregistered}}$ \cite{MATLAB2012}. The Maple$^{\text{\texttrademark}}$ function used for the generation of the trees is \texttt{RandomTree(n)} which has a randomized process as follows: ``Starting with the empty undirected graph $T$ on $n$ vertices, edges are chosen uniformly at random and inserted into $T$ if they do do not create a cycle.  This is repeated until $T$ has $n-1$ edges." \cite{MAPLE16ManualRandomTree} \textsc{Figure} \ref{TreesExamplesRandom} shows the number of examples with a difference between guaranteed gain obtained with the strategy given by the CSS algorithm, and the best upper bound. The difference is given as a proportion of the weight of the centroid. The columns represent the number of examples, out of a total of 1000, with a difference in the intervals $[0,0]$, $(0,0.01]$, $(0.01,0.02]$,...,$(0.29,0.30]$ respectively. 
\begin{figure}
   \centering
   \subfigure[Frequency Graph with $n=100$.]{\includegraphics[width=2.81 in]{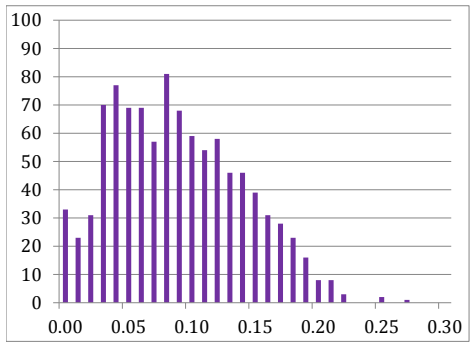}} \qquad
   \subfigure[Frequency Graph with $n=1000$.]{\includegraphics[width=2.81 in]{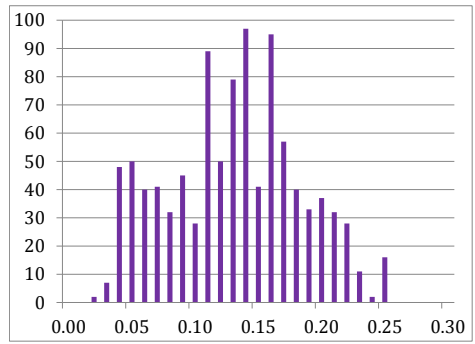}}\qquad
   \caption{\label{TreesExamplesRandom} Frequency of the differences between the guaranteed gain of the CSS Algorithm and the maximal gain of Player 1 as a proportion of the weight of the centroid in 1 000 random centroidal tree examples with total number of vertices, $n$.}
 \end{figure}

We see that the algorithm performs well and even ideally on a number of examples.  
 A weaker performance was observed mostly for trees with large thin branches. This can be explained from our result for spiders, where all branches are thin. For spiders, the best strategy assigns positive probabilities to vertices at larger distance (about square root of the length) from the root. The CSS algorithm limits the number of vertices with positive probability in any branch to three. Thus, a trade-off of the versatility of the CSS algorithm is that the performance on thin branches is not optimal. 

\section{Conclusion and Future Work}

We have explored the safe game for competitive diffusion for trees. We obtained precise results for special classes of trees, namely spiders and complete trees. These results were then incorporated in the CSS algorithm, which can be applied to any tree. This algorithm was evaluated experimentally, and was shown to give good results. 

However, performance of the CSS algorithm decreased with the presence of many thin branches. Generalizing the ideas presented here to include positive probabilities on more vertices on the branches would improve the algorithm. One might also consider slightly modifying the ordering of the branches or the distribution of the probabilities to compensate for the branches being considered isolated when the suggested distribution of probabilities were calculated. 

We believe that the ideas put forward in this paper can be extended beyond trees. Namely, the spread of influence can be seen as taking place in the form of a subtree of the graph. The general approach, of assigning positive probabilities only to relatively few vertices close to the center of the graph, is likely to be of value here as well. Thus, it seems plausable that the CSS Algorithm might be modified to a more general setting. It may not be possible to obtain tight bounds for the safety value in the more general setting, but it should be possible to generate safe strategies that perform well in practice. 

\bibliography{Bibliography}
\bibliographystyle{plain}

\section*{Appendix: Proof of Theorem \ref{ThmTreesAlgorithmGGain}}

\begin{proof}

If Player 2 chooses to start with the centroid $c$, the expected gain of Player 1 with the strategy $\sigma_k$ is as given earlier in (\ref{ProofTrees7}). The labelling of the branches is as given before (\ref{ProofTrees7})

Let us now consider the cases when Player 2 chooses a vertex in a thin branch, $B_r$, $r \in \{i^{Thn}_1,i^{Thn}_2,...,i^{Thn}_{k_3}\}$.
\begin{itemize}
\item[i)] If Player 2 chooses to start with the vertex $u_r$, the expected gain of Player 1 with the strategy $\sigma_k$ is
\begin{align}
\label{ProofTrees6}
\begin{split}
Gain(T,\sigma_k,Z(u_r)) &=\alpha \cdot w(u_r) +  \sum_{j=i^{Thk}_1}^{i^{Thk}_{k_1}} \beta_j \cdot \overline{w}(u_j)\\
&+\sum_{j=i^{Med}_1}^{i^{Med}_{k_2}} \left( \beta_j \cdot\overline{w}(u_j) + \gamma_j \cdot\overline{w}(u_j) \right)\\&+\beta_r \cdot 0+\gamma_r\cdot\overline{w}(t_r)+\delta_r\cdot\overline{w}(s_r)\\&+\sum_{j=i^{Thn}_1,\text{ } j\not = r}^{i^{Thn}_{k_2}} \left( \beta_j \cdot\overline{w}(u_j)+ \gamma_j \cdot\overline{w}(u_j)+ \delta_j \cdot\overline{w}(t_j)\right).
\end{split}
\end{align}
For all $j \in \{1,2,...,k\}$, by definition we have that $w(u_j)<w(t_j)$, and thus
\begin{equation}
\label{ProofTrees1}
\overline{w}(u_j)>\overline{w}(t_j).
\end{equation}
Also, by the definition of $\beta_r$,
\begin{equation}
\label{Equation15}
\beta_r\cdot\overline{w}(u_r)+\delta_r\cdot\overline{w}(t_r)=\alpha\cdot w(u_r)+\delta_r\cdot\overline{w}(s_r).
\end{equation}
 Using these, we can show that the expected gain (\ref{ProofTrees6}) is greater than or equal to $Gain(T,\sigma_k,c)$ of (\ref{ProofTrees7}).

\item[ii)] If Player 2 chooses vertex $t_r$, the expected gain of Player 1  is
\begin{align}
\label{ProofTrees8}
\begin{split}
Gain(T,\sigma_k,Z(t_r)) &=\alpha \cdot w(u_r) +  \sum_{j=i^{Thk}_1}^{i^{Thk}_{k_1}} \beta_j \cdot w(u_r)\\
&+\sum_{j=i^{Med}_1}^{i^{Med}_{k_2}} \left( \beta_j \cdot w(u_r) + \gamma_j \cdot \overline{w}(u_j) \right) \\
&+\beta_r \cdot w(t_r)+\gamma_r \cdot 0+ \delta_r\cdot \overline{w}(s_r)\\
&+\sum_{j=i^{Thn}_1,\text{ } j\not = r}^{i^{Thn}_{k_3}} \left( \beta_j \cdot w(u_r)+ \gamma_j \cdot\overline{w}(u_j)+ \delta_j \cdot\overline{w}(u_j)\right).
\end{split}
\end{align}
For all $j \in \{ 1,2,...k\}$, we have that 
\begin{equation}
\label{ProofTrees2}
w(u_r)>\overline{w}(u_j).
\end{equation} 
IF $j\not= r$, this follows since the branch at $u_r$ in which the centroid is located includes the edges in the branch  in which $u_j$ is located. If $j=r$, we have the result by Lemma \ref{Lemmanminusweight}. Moreover,
\begin{equation*}
\beta_r\cdot w(t_r)=\gamma_r \cdot \overline{w}(t_r)
\end{equation*}
by the definition of $\gamma_r$. 
Thus, the expected gain (\ref{ProofTrees8}) is greater than or equal to $Gain(T,\sigma_k,u_r)$ of (\ref{ProofTrees7}).

\item[iii)] If Player 2 chooses vertex $s_r$, the expected gain of Player 1  is
\begin{align*}
\begin{split}
Gain(T,\sigma_k,Z(s_r)) &=\alpha \cdot w(t_r) +  \sum_{j=i^{Thk}_1}^{i^{Thk}_{k_1}} \beta_j \cdot w(u_r)\\
&+\sum_{j=i^{Med}_1}^{i^{Med}_{k_2}} \left( \beta_j \cdot w(u_r) + \gamma_j \cdot w(u_r) \right) \\
&+\beta_r \cdot w(t_r)+\gamma_r \cdot w(s_r)+ \delta_r \cdot 0\\
&+\sum_{j=i^{Thn}_1,\text{ } j\not = r}^{i^{Thn}_{k_3}} \left( \beta_j \cdot w(u_r)+ \gamma_j \cdot w(u_r)+ \delta_j \cdot \overline{w}(u_j)\right).
\end{split}
\end{align*}
This expected gain is greater than or equal to $Gain(T,\sigma_k,t_r)$ by (\ref{ProofTrees2}) and  since
\begin{equation*}
\alpha \cdot w(u_r)+ \delta_r \cdot \overline{w}(s_r)=\alpha\cdot w(t_r)+\gamma \cdot w(s_r)
\end{equation*}
by the definition of $\delta_r$.
 
\item[iv)] If Player 2 chooses to start with a vertex $v_j$, $v_j \not\in\{ u_r,t_r,s_r\}$, the payoff to Player 1 on all vertices not part of the branch $B_r$ can only increase since Player 2's starting vertex is at a greater distance. Specifically, the payoff to Player 1 on the centroid is now at least $w(u_r)$. Moreover,
\begin{itemize}
\item[-] If $v_j$ is a descendant of $u_r$ but not of $t_r$ and $s_r$ then
\begin{align*}
\begin{split}
Gain(T,Z(u_r),Z(v_j))&\geq w(u_r) \\
Gain(T,Z(t_r),Z(v_j))&\geq \overline{w}(t_r) \\
Gain(T,Z(s_r),Z(v_j))&\geq \overline{w}(s_r).
\end{split}
\end{align*} 
\item[-] If $v_j$ is a descendant of $u_r$ and $t_r$ but not of $s_r$ then 
\begin{align*}
\begin{split}
Gain(T,Z(u_r),Z(v_j))&\geq w(u_r) \\
Gain(T,Z(t_r),Z(v_j))&\geq w(t_r) \\
Gain(T,Z(s_r),Z(v_j))&\geq \overline{w}(s_r).
\end{split}
\end{align*} 
\item[-] If $v_j$ is a descendant of $u_r$, $t_r$ and $s_r$, then
\begin{align*}
\begin{split}
Gain(T,Z(u_r),Z(v_j))&\geq w(u_r) \\
Gain(T,Z(t_r),Z(v_j))&\geq w(t_r) \\
Gain(T,Z(s_r),Z(v_j))&\geq w(s_r).
\end{split}
\end{align*}
\end{itemize}
Since $w(v)>\overline{w}(v)$ for any vertex $v$ other than the centroid by Lemma \ref{Lemmanminusweight}, in all cases we have 
\begin{align*}
\begin{split}
&\alpha \cdot Gain(T,Z(c),Z(v_j))+\beta_r \cdot Gain(T,Z(u_r),Z(v_j))\\ &+\gamma_r \cdot  Gain(T,Z(t_r),Z(v_j))+ \delta_r \cdot Gain(T,Z(s_r),Z(v_j)) \\ 
&\geq \alpha \cdot w(u_r)+ \beta_r \cdot w(u_r)+ \gamma_r \cdot \overline{w} (t_r)+\delta_r \cdot \overline{w} (s_r) \\
&> \alpha \cdot w(u_r)+  \gamma_r \cdot \overline{w}(t_r)+\delta_r \cdot \overline{w}(s_r)
\\ &= \beta_r \cdot \overline{w} (u_r)+\gamma_r \cdot \overline{w} (t_r)+\delta_r \cdot \overline{w} (t_r),
\end{split}
\end{align*} 
where the last equality follows from  (\ref{Equation15}). Now, 
\begin{align*}
\begin{split}
Gain(T,Z(c),Z(c))&=0,\\ Gain(T,Z(u_r),Z(c))&=\overline{w}(u_r), \\ Gain(T,Z(t_r),Z(c))&=\overline{w}(t_r) \text{ and } \\ Gain(T,Z(s_r),Z(c))&=\overline{w}(t_r),
\end{split}
\end{align*}
Thus,
\begin{align*}
\begin{split}
&\beta_r \cdot \overline{w} (u_r)+\gamma_r \cdot \overline{w} (t_r)+\delta_r \cdot \overline{w} (t_r)\\ &=\alpha \cdot Gain(T,Z(c),Z(c))+\beta_r \cdot Gain(T,Z(u_r),Z(c))\\ &+ \gamma_r \cdot Gain(T,Z(t_r),Z(c))+ \delta_r \cdot Gain(T,Z(s_r),Z(c)).
\end{split}
\end{align*}
Therefore, the expected gain of Player 1 when Player 2 chooses the vertex $v_j$ is greater than or equal to the expected gain of Player 1 when Player 2 chooses the centroid. 

Similarly, we can show that the expected gain of Player 1 when Player 2 chooses to start with a vertex in a medium branch or thick branch, is greater than the expected gain of Player 1 when Player 2 chooses to start with the centroid.

If Player 2 chooses  a vertex in a branch $B_i$, $i>k$ instead of the centroid, Player 1's payoff on the vertices in the branches $\{B_1,B_2,...,B_k\}$ can only increase. Namely, in this case strategy $\sigma_k$ assign no positive probabilities to any vertex in the branch. So, compared to the centroid, Player 2's starting vertex is at a greater distance from the vertices on which Player 1 has positive probability. Player 1's payoff on the centroid, being zero when Player 2 chooses the centroid, also increases. Thus, the expected gain of Player 1 is again greater.

To sum up, the expected gain of Player 1 with the strategy $\sigma_k$ is minimal when Player 2 chooses the centroid. Thus, $Gain(T,\sigma_k,Z(c))$ is the guaranteed gain of Player 1 with the strategy $\sigma_k$. 
\end{itemize}
\end{proof}


\end{document}